 \providecommand{\keywords}[1]{\textbf{\textit{Keywords---}} #1}
\newtheorem{definition}{Definition}[section]
\newtheorem{proposition}{Proposition}[section]
\newtheorem{theorem}{Theorem}[section]
\newtheorem{corollary}{Corollary}[section]
\newtheorem{example}{Example}[section]
\title{Cognitive Bias and Belief Revision}
\author{Panagiotis Papadamos
\institute{Technical University of Denmark}
\email{panagiotispapadamos@gmail.com }
\and
Nina Gierasimczuk \qquad
\institute{Technical University of Denmark}
\email{\quad nigi@dtu.dk}
}
\begin{document}
\maketitle

\begin{abstract}
In this paper we formalise three types of cognitive bias within the framework of belief revision: confirmation bias, framing bias, and anchoring bias. We interpret them generally, as restrictions on the process of iterated revision, and we apply them to three well-known belief revision methods: conditioning, lexicographic revision, and minimal revision. We investigate the reliability of biased belief revision methods in truth-tracking. We also run computer simulations to assess the performance of biased belief revision in random scenarios. 
\end{abstract}

\keywords{belief revision,
truth-tracking,
cognitive bias,
confirmation bias,
framing bias,
anchoring bias,
computer simulations,
learning theory}

\section{Introduction}
Cognitive bias is a systematic human thought pattern connected with the distortion of received information, that usually leads to deviation from rationality (for a recent analysis see \cite{fast-and-slow-thinking}). Such biases are specific not only to human intelligence, they can be also ascribed to artificial agents, algorithms and programs. For instance, confirmation bias can be seen as stubbornness  against new information which contradicts the previously adopted view. In some cases such confirmation bias can be implemented into a system purposefully. Take as an example an authentication algorithm and a malicious user who is trying to break into an email account. Say that the algorithm, before it locks the access, allows only three attempts to enter the correct password. Hence, the algorithm (temporarily) insists that the user who tries to connect is the real holder of the credentials, despite the input being inconsistent with that hypothesis. The algorithm will not revise its `belief' about the user's identity, until it receives the evidence to the contrary a specific number of times. Another unorthodox example of a biased artificial agent concerns anchoring bias, where an agent makes a decision based on a recent, selected piece of information, possibly ignoring other data. In the context of artificial agents, such situations may occur justifiably when resources (like time or memory) are limited. As an example consider two computers, $A$ and $B$, connected within a network. Computer $A$ attempts to communicate with computer $B$, but for some reason, computer $A$ does not receive $B$'s response within a specified time range and, as a result, erroneously considers $B$ dead. This inability to communicate leads computer $A$ to change its `belief' about $B$'s liveness, and, subsequently, to make decisions based on this distortion.

In this paper we study some dynamic aspects of three types of cognitive bias: confirmation bias, framing bias, and anchoring bias. We will apply them to three well-known belief revision methods: conditioning, lexicographic, and minimal revision \cite{Spohn:1988aa,Rott:1989ab,Boutilier:1993fj,Benthem:2007aa}. We first recall the background of the model of truth-tracking by belief revision from \cite{Gie10,BGS11,Baltag:2019ab} (related to earlier work in \cite{Kel98,Kelly:1999aa}, see also \cite{Gierasimczuk:2013aa}), which borrows from computational learning theory, and identifiability in the limit in particular \cite{Gold67,JORS99}. We proceed by investigating the effect of bias on truth-tracking properties of various belief revision policies. Finally, we present our computer simulation in which we empirically compare the performance of biased and regular belief revision in different scenarios. We close with several directions of further work.

\subsection{Background: truth-tracking and belief revision}
We will now introduce basic notions, following the framework of truth-tracking by belief revision proposed in \cite{Baltag:2019ab}.  
Our agents' uncertainty space will be represented by a so-called \emph{epistemic space}, $\mathbb{S}=(S,\mathcal{O})$, where $S$ is a non-empty, at most countable set of worlds (or states), and $\mathcal{O}\subseteq \mathcal{P}(S)$ is a set of possible observations. We will call any subset $p$ of $S$ a \emph{proposition}, and we will say that a proposition $p$ is \emph{true in $s\in S$} if $s\in p$. 

Data streams and sequences describe the information an agent receives over time. A \emph{data stream} is an infinite sequence of observations $\vec{O}=(O_{0}, O_{1},\ldots)$, where $O_{i}\in \mathcal{O}$, for $i\in\mathbb{N}.$ A \emph{data sequence} is a finite initial segment of a data stream; we will write $\vec{O}[n]$ for the initial segment of $\vec{O}$ of length $n$, i.e., $\vec{O}_0,\vec{O}_1,\ldots, \vec{O}_{n-1}$. Given a (finite or infinite) data sequence $\sigma$, $\sigma_{n}$ is the $n$-th element of in $\sigma$; $set(\sigma)$ is the set of elements enumerated in $\sigma$; $\#O(\sigma)$ is the frequency of observation $O$ in $\sigma$; let $\tau$ be a finite data sequence, then $\tau\cdot\sigma$ is the concatenation of $\tau$ and $\sigma$. 
A special type of data streams are \emph{sound and complete} streams. A data stream $\vec{O}$ is \emph{sound with respect to a state $s\in S$} if and only if every element in $\vec{O}$ is true in the world $s$, formally $s\in\vec{O}_{n}$, for all $n\in \mathbb{N}$. A data stream $\vec{O}$ is \emph{complete with respect to a state $s\in S$} if and only if every proposition true in $s$ is in $\vec{O}$, formally if $s\in O$ then there is an $n\in\mathbb{N}$, such that $O=\vec{O}_{n}$. Sound and complete streams form the most accommodating conditions for learning.

\begin{definition}
    Given an epistemic space $\mathbb{S}=(S,\mathcal{O})$ and a data sequence $\sigma$, a \emph{learning method $L$} (also referred to it as a \emph{learner}), is a function that takes as an input the epistemic space $\mathbb{S}$ and the sequence $\sigma$, and returns a subset of $S$, $L(\mathbb{S},\sigma)\subseteq S$, called a \emph{conjecture}. 
\end{definition}

The goal of learning is to identify the \emph{actual world}, which is a special designated element of the epistemic space. Given the epistemic space of an agent and the incoming information, which is (to some degree) trusted, the agent learns facts about the actual world step by step in order to achieve its goal, identifying the actual world. 

\begin{definition}
Let $\mathbb{S}=(S,\mathcal{O})$ be an epistemic space, $s\in \mathbb{S}$ is identified in the limit by $L$ on $\vec{O}$, iff there is a $k$, such that for all $n\geq k$, $L(\mathbb{S},\vec{O}[n])=\{s\}$; $s\in \mathbb{S}$ is identified in the limit by $L$ iff $s$ is identified in the limit by $L$ on every sound and complete data stream for $s$; $S$ is identified in the limit by $L$ if all $s\in S$ are identified in the limit on by $L$; Finally, $\mathbb{S}$ is \emph{identifiable in the limit} iff there exists an $L$ that identifies it in the limit.
\end{definition}

\smallskip

To be able to talk about beliefs of our agents (and whether or not they align with the actual world), we add to the epistemic space a plausibility relation. 
Given an epistemic space $\mathbb{S}=(S,\mathcal{O})$, a \emph{prior plausibility} assignment ${\preceq}\subseteq S\times S$ is a total preorder. Such $\mathbb{S}^\preceq =(S, \mathcal{O}, \preceq)$ will be called a plausibility space (generated from $\mathbb{S}$, for simplicity of our notation we will often refer to such space with $\mathbb{B}$). The prior plausibility assignment is not fixed---it may be different for different agents, and serves as starting points of their individual belief revision processes. Plausibility models allow defining beliefs of agents. For any proposition $p$, we will say that the agent believes $p$ in $\mathbb{S}^\preceq$ if $p$ is true in all worlds in $min_\preceq (S)$. 

Plausibility spaces, and hence also beliefs, change during the belief revision process. We will focus on three popular belief revision methods that can drive such a learning: conditioning, lexicographic, and minimal belief revision. 
\begin{definition}[Revision method]\label{cond}\label{lex}\label{min}\label{one-step-methods}  A \emph{one-step revision method $R_1$} is a function such that for any plausibility space $\mathbb{B}=(S,\mathcal{O},\preceq)$ and any observable proposition $p\in \mathcal{O}$ returns a new plausibility space $R_{1}(\mathbb{B},p)$. We define three one-step revision methods:
    \begin{enumerate}
    \item[] \emph{Conditioning}, $Cond_1$, is a one-step revision method that takes as input a plausibility space $\mathbb{B}=(S,\mathcal{O},\preceq)$ and a proposition $p\in \mathcal{O}$ and returns the restriction of $\mathbb{B}$ to $p$. Formally, $Cond(\mathbb{B},p)=(S^{p},\mathcal{O},{\preceq}^{p})$, where $S^{p}=S\cap p$ and ${\preceq}^{p}={\preceq}\cap(S^{p}\times S^{p})$.
    \item[] \emph{Lexicographic revision}, $Lex_1$, is     a one-step revision method that takes as input a plausibility space $\mathbb{B}=(S,\mathcal{O},\preceq)$ and a proposition $p\in \mathcal{O}$ and returns a plausibility space $Lex(\mathbb{B},p)=(S,\mathcal{O},\preceq^{'})$, such that for all $t, w\in S$, $t\preceq^{'}w$ if and only if $t\preceq_{p}w$ or $t\preceq_{\bar{p}}w$ or $(t\in p$ and $w\notin p)$, where ${\preceq_{p}}={\preceq}\cap(p\times p)$, ${\preceq}_{\bar{p}}={\preceq}\cap(\bar{p}\times\bar{p})$, and $\bar{p}$ is the complement of $p$ in $S$.
    \item[] \emph{Minimal revision}, $Mini_1$, is     a one-step revision method that takes as input a plausibility space $\mathbb{B}=(S,\mathcal{O},\preceq)$ and a proposition $p\in \mathcal{O}$ and returns a new plausibility space $Mini(\mathbb{B},p)=(S,\mathcal{O},\preceq^{'})$ where for all $t,w\in S$, if $t\in \min_{p}$ and $w\notin \min_{p}$, then $t\preceq^{'}w$, otherwise $t\preceq^{'}w$ if and only if $t\preceq w$.
      \end{enumerate}

  \noindent An \emph{iterated belief revision method $R$} is obtained by iterating the one-step revision method $R_{1}$: $R(\mathbb{B},\lambda)=\mathbb{B}$ if $\lambda$ is an empty data sequence, and $R(\mathbb{B},\sigma\cdot p)=R_{1}(R(\mathbb{B},\sigma),p)$.
\end{definition}

\begin{definition}\label{def:br_learning}
Let $R$ be an iterated belief revision method, $S^\preceq$ a plausibility space, and $\vec{O} $ a stream. A \emph{belief revision based learning method} is defined in the following way: $L_R^\preceq(\mathbb{S},\vec{O} [n])=min_\preceq R(\mathbb{S}^\preceq,\vec{O} [n])$.

We will say that the revision method $R$ identifies $\mathbb{S}$ in the limit iff there is a $\preceq$ such that $L_R^\preceq$ identifies $\mathbb{S}$ in the limit. A revision method $R$ is \emph{universal} on a class $\mathbb{C}$ of epistemic spaces if it can identify in the limit every epistemic space $\mathbb{S}\in\mathbb{C}$ that is identifiable in the limit. 
\end{definition}

\begin{theorem}[\cite{Baltag:2019ab}]\label{cond_uni}\label{lex_uni}\label{mini_uni}\label{th:br_uni}
    The belief revision methods $Cond$ and $Lex$ are universal, while $Mini$ is not.
\end{theorem}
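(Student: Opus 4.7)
My plan is to handle the two universality claims with a single enumeration-based prior, and the negative claim with a small counterexample. Assume $\mathbb{S}=(S,\mathcal{O})$ is identifiable in the limit; then $S$ is at most countable, and (by the telltale-style characterisation underlying any successful learner) each pair of distinct states must be separated by some $O\in\mathcal{O}$. I fix an enumeration $S=\{s_{0},s_{1},\ldots\}$ and set the prior by $s_{i}\prec s_{j}$ iff $i<j$; this $\preceq$ will serve as the witness for both $Cond$ and $Lex$.

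For $Cond$, let $s=s_{k}$ be the true state and $\vec{O}$ a sound and complete stream for $s$. Soundness gives $s\in \vec{O}_{n}$ for all $n$, so $s$ is never eliminated by conditioning. For every $s_{j}\neq s$, the separation property supplies an observation true at $s$ but false at $s_{j}$, and completeness forces this observation to appear in $\vec{O}$ at some finite stage, after which $s_{j}$ is gone from the carrier. Only the finitely many $s_{j}$ with $j<k$ can interfere with $s$ being the $\preceq$-minimum, so $L_{Cond}^{\preceq}$ stabilises at $\{s_{k}\}$.

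For $Lex$ I would argue by re-ranking rather than elimination. On a sound and complete stream for $s_{k}$, every observation $O_{n}$ contains $s_{k}$, so the lexicographic step preserves the internal order of the $O_{n}$-worlds while lifting all of them strictly below the non-$O_{n}$-worlds. Iterating: for each $s_{j}\neq s_{k}$ a separating observation excluding $s_{j}$ eventually appears, and from that stage on $s_{k}$ sits strictly below $s_{j}$; a sound stream for $s_{k}$ cannot later reverse this, since every future observation still contains $s_{k}$. Hence $s_{k}$ eventually becomes, and remains, the unique $\preceq$-minimum.

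For the non-universality of $Mini$, the plan is to exhibit a small identifiable space that defeats $Mini$ under every prior. The candidate is $S=\{s,t,u\}$ with $\mathcal{O}=\{\{s,t\},\{s,u\}\}$; each state is characterised by the exact subfamily of $\mathcal{O}$ to which it belongs, and $Cond$ with e.g.\ prior $u\prec t\prec s$ succeeds, so the space is identifiable. However, $Mini$ only promotes the current $\preceq$-minimum of the observed proposition to the top while preserving the remaining order, and I would run a short case analysis over the (at most thirteen) total preorders on three elements to show that every prior leaves at least one target unidentified: when the target is $s$ the stream is forced to alternate $\{s,t\}$ and $\{s,u\}$, which makes $Mini$ swap the occupants of the top slot without ever isolating $s$; when the target is $t$ or $u$ the stream repeats a single observation, whose minimum either is already not the target or was lifted to the top on the first step and stays there. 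The main obstacle is organising this case analysis cleanly; once set up, each prior is dispatched by tracking the minimum over a short prefix of the forcing stream.
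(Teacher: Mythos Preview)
The paper does not prove this theorem at all; it is stated with a citation to \cite{Baltag:2019ab} and then used as a black box. So there is nothing in the paper to compare your argument against, and I will comment on the proposal on its own terms.

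There is a genuine gap in your universality arguments for $Cond$ and $Lex$. You fix an \emph{arbitrary} enumeration $S=\{s_{0},s_{1},\ldots\}$, take the induced prior, and then claim that ``the separation property supplies an observation true at $s$ but false at $s_{j}$'' for every $s_{j}\neq s$. Identifiability in the limit does not give you this directed separation. Consider $S=\{a,b\}$ with $\mathcal{O}=\{\{b\},\{a,b\}\}$: the space is identifiable (prior $a\prec b$ works for $Cond$), yet there is no observation true at $a$ and false at $b$. If your arbitrary enumeration happens to list $b$ before $a$, your prior is $b\prec a$, and on the sound and complete stream $(\{a,b\},\{a,b\},\ldots)$ for $a$ neither $Cond$ nor $Lex$ ever removes or demotes $b$; your learner is stuck at $\{b\}$. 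What identifiability \emph{does} give you---and this is the actual content of the tell-tale style characterisation you allude to---is the existence of a \emph{specific} $\omega$-type order in which every state is separable from each strictly more plausible state. The proof in \cite{Baltag:2019ab} first extracts this particular order from the identifiability hypothesis and then uses it as the prior; you need to do the same, not pick the enumeration first and hope the separation holds.

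Your plan for $Mini$ is reasonable. The three-point space with $\mathcal{O}=\{\{s,t\},\{s,u\}\}$ is identifiable by $Cond$ as you say, and the case split you outline works: when $s$ is strictly $\preceq$-minimal, take target $t$ and the constant stream $(\{s,t\},\{s,t\},\ldots)$; otherwise take target $s$ with an adversarially chosen stream and check that $s$ never reaches the top. One small correction: the stream for target $s$ is not ``forced to alternate''---you are free to \emph{choose} any sound and complete stream, and choosing an alternating one is exactly the adversarial move you need.
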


Learning methods can be compared with respect to their power. We will say that a learner $L'$ is \emph{at least as powerful as} learner $L$, $L\sqsubseteq L'$, if every epistemic space $\mathbb{S}$ that is identified in the limit by $L'$, is identified in the limit by $L$. We will say that $L'$ is \emph{strictly more powerful than} learner $L$, if $L\sqsubseteq L'$ and it's not the case that $L'\sqsubseteq L$. Analogously, using definition \ref{def:br_learning}, we will apply the same terms to belief revision methods.

In the remainder of this paper we will discuss several ways of introducing cognitive bias into this picture of iterated belief revision and long-term truth-tracking, together with computer simulation results that paint a more quantitative picture of the analytical results.

\section{Simulating belief revision}\label{sec:simulating_br} Throughout this work we also present the results of computer simulations we run to see how various (biased) methods compare with respect to their truth-tracking ability. To this end we implemented  artificial belief revision agents (for the biased and unbiased scenarios), which try to identify a selected actual world on the basis of sound and complete streams. We use the object-oriented programming language Python. The code can be found in the repository of the project \cite{repo}, and the structure of the code can be seen in Figure \ref{fig:structure_before_instantiation}. 
\begin{figure}[htbp]
    \centering
    \includegraphics[width=0.5\textwidth]{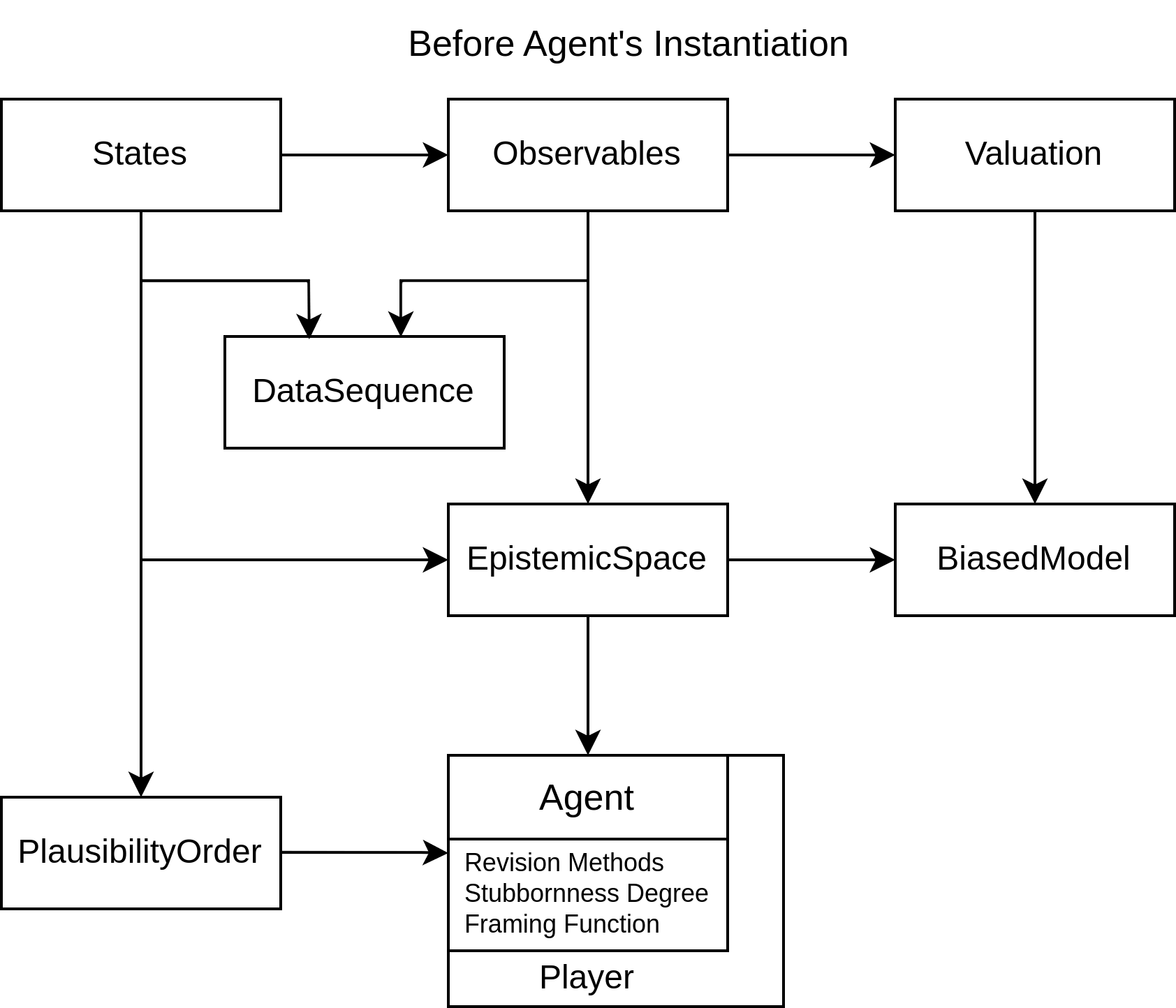}
    \caption{Communication of classes in the implementation}
    \label{fig:structure_before_instantiation}
\end{figure}

The simulation included both custom and random tests. Custom tests were created to check the correctness of the implemented functions, while random tests were created to investigate the reliability and the performance of the (biased) belief revision methods. In the implementation all plausibility spaces are finite. This choice is governed by the practicality of the implementation. We ran several series of tests. Each series of tests consisted of $200$ tests, while the plausibility spaces consisted of $\approx 5$ possible states and $\approx 12$ observables, and the incoming data sequence was longer than the number of observables ($\approx 2-4$ more observables). These numbers were hard-coded to ensure computational feasibility of the experiment. The plausibility spaces we created for the automatic tests were completely random and so could turn out to be unidentifiable. This is the reason why there were identification failures for the universal revision methods, even for unbiased cases. 
After we randomly generated an epistemic space, one of the states (let us call it $s$) was randomly designated to be the actual world, and a sound and complete data stream $\sigma$ for $s$ was generated. A plausibility preorder over the epistemic was then randomly generated (generating a plausibility space).  We then called on each of the (biased) revision methods and made them attempt to identify $s$ from $\sigma$. As we will also see in the later comparisons, overall the frequencies of successful identification by unbiased (regular) belief revision methods were very high across experiments: for conditioning between $94\%$ and $98\%$, for lexicographic revision between $97\%$ and $99\%$, and for minimal revision between $77\%$ and $82\%$.

\section{Cognitive bias and belief revision}\label{biased_model}
We will propose abstract accounts of three types of cognitive bias: confirmation bias, framing bias, and anchoring bias. For each we will describe how an agent revises its belief. We will see how the bias affects truth-tracking, both theoretically, through a learning-theoretic analysis of (non-)universality, and practically, in computer simulations.

\subsection{Confirmation Bias}
Hahn and Harries \cite{Hahn:2014aa} characterized confirmation bias as a list of four `cognitions', namely: hypothesis-determined information seeking, failure to pursue falsification strategy in the context of conditional reasoning, stubbornness  to change of belief once formed, and overconfidence or illusion of validity of our belief. The first cognition will not concern us, as we don't focus on agents that actively seek information, but rather we focus on how passive agents perceive incoming information.  

To analyse selective bias, given a space $\mathbb{S}=(S,\mathcal{O})$, we could designate a subset of $\mathcal{O}$ to be the set of propositions that are `important' to the agent. We would then allow that they are given a special, privileged treatment during the revision process. We choose to express this level of importance more generally with a numerical assignment, which we call the \emph{stubbornness function}.
\begin{definition}\label{stub_degree}
    Given an epistemic space $\mathbb{S}=(S,\mathcal{O})$, the \emph{stubbornness   function} is $D:\mathcal{P}(S)\rightarrow\mathbb{N}$.
\end{definition}

The stubbornness function describes the level of an agent's bias towards a proposition, intuitively the ones with stubbornness degree higher than $1$ can be considered important to the agent. The higher the stubbornness degree, the more biased the agent is towards the proposition, so the more difficult it is to change its belief in that proposition---there should be strong evidence against it. For an unbiased agent the value of the function $D$ for every proposition is $1$. An unbiased agent will revise its beliefs instantly after it receives information inconsistent with its beliefs. An agent that is biased towards a proposition $p$ and believes $p$, should receive information `$\neg p$' $D(p)$-many times in order to react by revising its belief with $\neg p$. The agent struggles with falsifying its belief, maintains the illusion of its belief's validity, by resisting change. 

For each one-step revision method $R_1$ given in Definition \ref{one-step-methods}, we will provide a confirmation-biased version or iterated revision $R_{CB}$. $R_{CB}$ will take a plausibility space and a sequence of data and output a new plausibility space. Intuitively, it will attempt to execute the unbiased version of the revision method, but this will only succeed if the stubbornness degree allows it, i.e., if the data contradicting the proposition is repeated enough times.

\begin{definition}[Confirmation-biased revision methods]\label{conf-biased-methods}
Let $\mathbb{B}=(S,\mathcal{O},\preceq)$ be a plausibility space and let $D$ be a stubbornness   function, $\sigma\in \mathcal{O}^{*}$ be a data sequence\footnote{Let $\Sigma$ be a set, then $\Sigma^\ast$ is a set of all finite sequences of elements from $\Sigma$.}, $p\in \mathcal{O}$ be an observable and $R_{1}$ is a one-step revision method. A confirmation-bias belief revision method $R_{CB}$ is defined in the following way:
\[R_{CB}(\mathbb{B},\lambda)=\mathbb{B},  \]  \[R_{CB}(\mathbb{B},\sigma\cdot p)=
\begin{cases}
      R_{1}(R_{CB}(\mathbb{B},\sigma),p) & \text{if}\quad \#p(\sigma)\geq D(\overline{p}),\\
      R_{CB}(\mathbb{B},\sigma) & \text{otherwise}.
\end{cases} 
\]
where $\lambda$ is an empty sequence, $\#p(\sigma)$ stands for the number of occurrences of $p$ in $\sigma$, and $\overline{p}$ the complement of $p$ in $S$.

We obtain the confirmation-biased conditioning, lexicographic and minimal revision $Cond_{CB}$, $Lex_{CB}$, $Mini_{CB}$ by substituting $R_{1}$ in the preceding definition by $Cond_{1}, Lex_{1}$, and $Mini_{1}$, respectively.
\end{definition}\label{cb_definition}

\paragraph{Truth-tracking under confirmation bias}\label{truth-tracking-confirmation-biased}
An agent under confirmation bias updates its belief with respect to the stubbornness degree. Below we see that it is the crucial factor that breaks the universality of the belief revision methods. 

\begin{proposition}\label{conBiasEx}\label{lexBiasEx}\label{prop:CB_truthtracking}
$Cond$, $Lex$ and, $Mini$ are strictly more powerful than $Cond_{CB}$, $Lex_{CB}$, and $Mini_{CB}$, respectively.
\end{proposition}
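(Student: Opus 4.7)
The plan is to show, for each pair $(R, R_{CB})$ with $R \in \{Cond, Lex, Mini\}$, both (i) an inclusion---every space identified by $R_{CB}$ is identified by $R$---and (ii) a strict separation---some space is identified by $R$ but not by $R_{CB}$.

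For (i), when $R$ is $Cond$ or $Lex$ the claim is immediate from Theorem~\ref{th:br_uni}: since both are universal, they identify every identifiable epistemic space, and in particular every space identified by their biased counterparts. For $Mini$, universality is unavailable, so the inclusion must be argued directly. The key observation is that $Mini_{CB}$ is, in effect, $Mini$ applied to a cofinal sub-stream of the data: for each proposition $p$ only the first $D(\bar p)$ occurrences in a stream are skipped, while all later occurrences are treated exactly as by $Mini$. One then has to verify that the additional early revisions performed by $Mini$ do not prevent convergence---which, since the stream is sound, reduces to checking that promoting $\min_p$ of a proposition $p$ containing $s$ cannot drive $s$ away from the minimum in the long run.

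For (ii), I would use a single two-state counterexample that serves for all three methods. Let $\mathbb{S} = (\{s_1, s_2\}, \{A, B\})$ with $A = \{s_1\}$ and $B = \{s_1, s_2\}$, fix the prior $s_2 \prec s_1$, and pick any stubbornness function $D$ with $D(\{s_2\}) \geq 1$. For actual world $s_1$, consider the sound and complete stream $\vec{O} = A, B, B, B, \ldots$ in which $A$ occurs exactly once. Each unbiased method revises on this single $A$ and places $s_1$ strictly at the minimum---$Cond$ by restricting the state space to $\{s_1\}$, $Lex$ by the lexicographic move, and $Mini$ by promoting $\min_A = \{s_1\}$ below non-minimal worlds---so the subsequent $B = S$ revisions are no-ops and $s_1$ is identified. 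The biased variant instead demands $\#A(\sigma) \geq D(\{s_2\}) \geq 1$, which never holds since there is no prior occurrence of $A$ in $\vec{O}$; the revision is skipped forever, the prior is untouched, and the minimum remains the incorrect $\{s_2\}$. For actual world $s_2$, the only sound and complete stream is $B, B, \ldots$, on which all revisions are no-ops for biased and unbiased methods alike, and the prior already correctly picks $s_2$. Hence $\mathbb{S}$ witnesses the strict separation for each of the three pairs.

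The main obstacle I expect is the inclusion for $Mini$: without universality one has to track explicitly how $Mini$'s additional early revisions (compared to $Mini_{CB}$) affect the evolving preorder and show that identification cannot be destroyed in the process; the rest of the argument is essentially a direct computation on the small space above.
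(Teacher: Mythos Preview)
Your two-state counterexample is neater than the paper's four-state one, and invoking universality (Theorem~\ref{th:br_uni}) for the inclusion direction when $R\in\{Cond,Lex\}$ is cleaner than the paper's direct computation. There is, however, a genuine gap in part~(ii): to conclude that $R_{CB}$ does \emph{not} identify $\mathbb{S}$ you must rule out \emph{every} prior $\preceq$, since by Definition~\ref{def:br_learning} a revision method identifies a space as soon as \emph{some} prior works. You only exhibit failure for the single prior $s_2\prec s_1$. The repair is short---for the remaining priors $s_1\prec s_2$ and $s_1\simeq s_2$, consider the actual world $s_2$: the unique sound and complete stream is $B,B,\ldots$, every $B=S$ step is a no-op for each method, and the minimum never becomes $\{s_2\}$---but it has to be stated. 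The paper handles exactly this point by a case split on whether the prior has a unique minimal element.

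On part~(i) for $Mini$, your caution is warranted: the paper itself only writes ``a similar argument works'' without giving one, and the $Cond$-style reasoning (the unbiased method's surviving set is contained in the biased one's, and a unique minimum is inherited by subsets) does not transfer directly once the method reorders rather than deletes. Your sketch---$Mini_{CB}$ is $Mini$ applied to a thinned sub-stream, and the extra early $Mini$-promotions on sound data should not push $s$ away from the minimum---is the right shape, but turning it into a proof requires tracking the evolving preorder and showing that whenever $Mini_{CB}$ has stabilised on $\{s\}$ by stage $k$, the additional revisions $Mini$ performed on $\vec{O}[k]$ still leave $s$ uniquely minimal. Neither you nor the paper actually carries this out.
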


\begin{proof}
We will give an example of an epistemic space $\mathbb{S}=(S,\mathcal{O})$ that is identified by $Cond$, but is not identified by $Cond_{CB}$. Let $\mathbb{S}=(S,\mathcal{O})$, where $S=\{w,t,s,r\}$, $\mathcal{O}=\{p,q,\bar{p},\bar{q}\}$ and $p=\{w,t\}$,$\bar{p}=\{s,r\}$, $q=\{w,s\}$, and $\bar{q}=\{t,r\}$. Clearly, this space is identifiable by regular conditioning method $Cond$: take the plausibility order that takes all worlds to be equally plausible. Then, whichever world $s\in S$ is designated as the actual one, a sound a complete data stream for $s$ will, in finite time, enumerate enough information to for the $Cond$ method to delete all the other worlds, and so the actual world remains as the only one, and so also the minimal (most plausible) one. 

To see that $Cond_{CB}$ will not be able to identify this space, let us assume that for all $x\in\mathcal{P}(S)$, $D(x)=2$. We need to show that for any plausibility preorder on $S$ there is a world $s\in S$, and a sound and complete stream $\vec{O}$ for $s$, such that $Cond_{CB}$ fails to identify $s$ on $\vec{O}$. Take a preorder $\preceq$ on $S$, there are two cases, either (a) there is a unique minimal element $s$, or (b) there is none. For (a), take a $t\in S$, such that $s\preceq t$. There is a sound and complete stream $\vec{O}$ for $t$, that enumerates each observable true in $t$ exactly once. While reading that sequence, $Cond_{CB}$ will not apply a single update, and so on a sound and complete sequence for $t$ it will converge to $s$, which means it fails to identify $t$. For (b), a similar argument holds---for all among the minimal equiplausible worlds there will be a sound and complete sequence that enumerates every piece of data exactly once. On such a stream the update of $Cond_{CB}$ will not fire at all, and so there will be always more than one candidate for the actual world, so $Cond_{CB}$ will not converge to the singleton of the actual world. 

It remains to be argued that $Cond$ can identify in the limit everything that $Cond_{CB}$ can. Take an epistemic space $\mathbb{S}=(S,\mathcal{O})$, and assume that an $s \in S$ is identified in the limit by $Cond_{CB}$ on a stream $\vec{O}$ (that is sound and complete for $s$). That means that there is a $k\in \mathbb{N}$, such that for all $n\geq k$,  $L^\preceq_{Cond_{CB}}(\mathbb{S},\vec{O}[n])=\{s\}$. So, for all $t\in S$ such that $t\neq s$, $\vec{O}[n]$ includes $O \in \mathcal{O}$, such that $t\notin O$. Hence, $L^\preceq_{Cond}(\mathbb{S},\vec{O}[k])=\{s\}$, and, since $Cond$ only removes worlds, and $\vec{O}$ never enumerates anything false in $s$, $L^\preceq_{Cond}(\mathbb{S},\vec{O}[n])=\{s\}$, for all $n\geq k$. 

A similar argument works for the $Lex_{CB}$ and $Mini_{CB}$ method.
\end{proof}

Putting together Theorem \ref{th:br_uni} and Proposition \ref{prop:CB_truthtracking} we get the following corollary.

\begin{corollary}
  $Cond_{CB}$ and $Lex_{CB}$ are not universal.  
\end{corollary}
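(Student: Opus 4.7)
The plan is to observe that this corollary is essentially a direct consequence of the two preceding results, combined with unpacking what ``strictly more powerful'' means in light of the definition of universality.

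First I would recall that, by Definition \ref{def:br_learning}, a revision method is universal precisely when it identifies every identifiable-in-the-limit epistemic space. So to show that $Cond_{CB}$ is not universal, it suffices to exhibit one identifiable-in-the-limit epistemic space that $Cond_{CB}$ fails to identify. Proposition \ref{prop:CB_truthtracking} tells us that $Cond$ is strictly more powerful than $Cond_{CB}$; by the definition of ``strictly more powerful'' given just before Definition \ref{def:br_learning}, there must be some epistemic space $\mathbb{S}$ that $Cond$ identifies in the limit but $Cond_{CB}$ does not. The concrete witness is already constructed in the proof of Proposition \ref{prop:CB_truthtracking}, namely the four-world space with observables $\{p,q,\bar p,\bar q\}$.

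Since $Cond$ identifies this $\mathbb{S}$, $\mathbb{S}$ is identifiable in the limit (one can also invoke Theorem \ref{th:br_uni} to get this for free from universality of $Cond$). But $\mathbb{S}$ is not identified by $Cond_{CB}$, so $Cond_{CB}$ fails to be universal. The argument for $Lex_{CB}$ is structurally identical: $Lex$ is universal by Theorem \ref{th:br_uni}, and strictly more powerful than $Lex_{CB}$ by Proposition \ref{prop:CB_truthtracking}, so the same chain of implications produces a witness of non-universality.

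There is no real obstacle here; the only minor subtlety is making sure the reader sees that ``strictly more powerful'' plus ``universal'' immediately yields a separating identifiable space. I would therefore write the proof as a two- or three-sentence combination of Theorem \ref{th:br_uni} and Proposition \ref{prop:CB_truthtracking}, rather than re-exhibiting the counterexample in detail.
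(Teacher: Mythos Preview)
Your proposal is correct and matches the paper's own approach exactly: the paper simply states that the corollary follows by putting together Theorem~\ref{th:br_uni} and Proposition~\ref{prop:CB_truthtracking}, without giving any further argument. Your unpacking of why ``universal'' plus ``strictly more powerful'' yields a separating identifiable space is a faithful (slightly more detailed) rendering of that one-line justification.
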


Clearly, confirmation bias can be detrimental to truth-tracking. The negative effect of stubbornness in revision can be uniformly overcome by the use of so-called fat streams, i.e., sound and complete streams that enumerate every information infinitely many times (which is possible as long as the set $\mathcal{O}$ is at most countable). Fat streams were introduced and studied before in computational learning theory in the context of memory-limited learners (see, e.g., \cite{Carlucci:2007aa}).

\paragraph{Simulation results} We ran a comparative simulation study of confirmation-biased revision and the regular unbiased revision, following the method described in Section \ref{sec:simulating_br}. The stubbornness values were randomly generated for all observables in the epistemic space as integers from $1$ to $5$.  Figure \ref{fig:cb_against_unbiased} shows the respective frequencies of truth-tracking success.

\begin{figure}[H]
    \centering
    \includegraphics[scale=0.8]{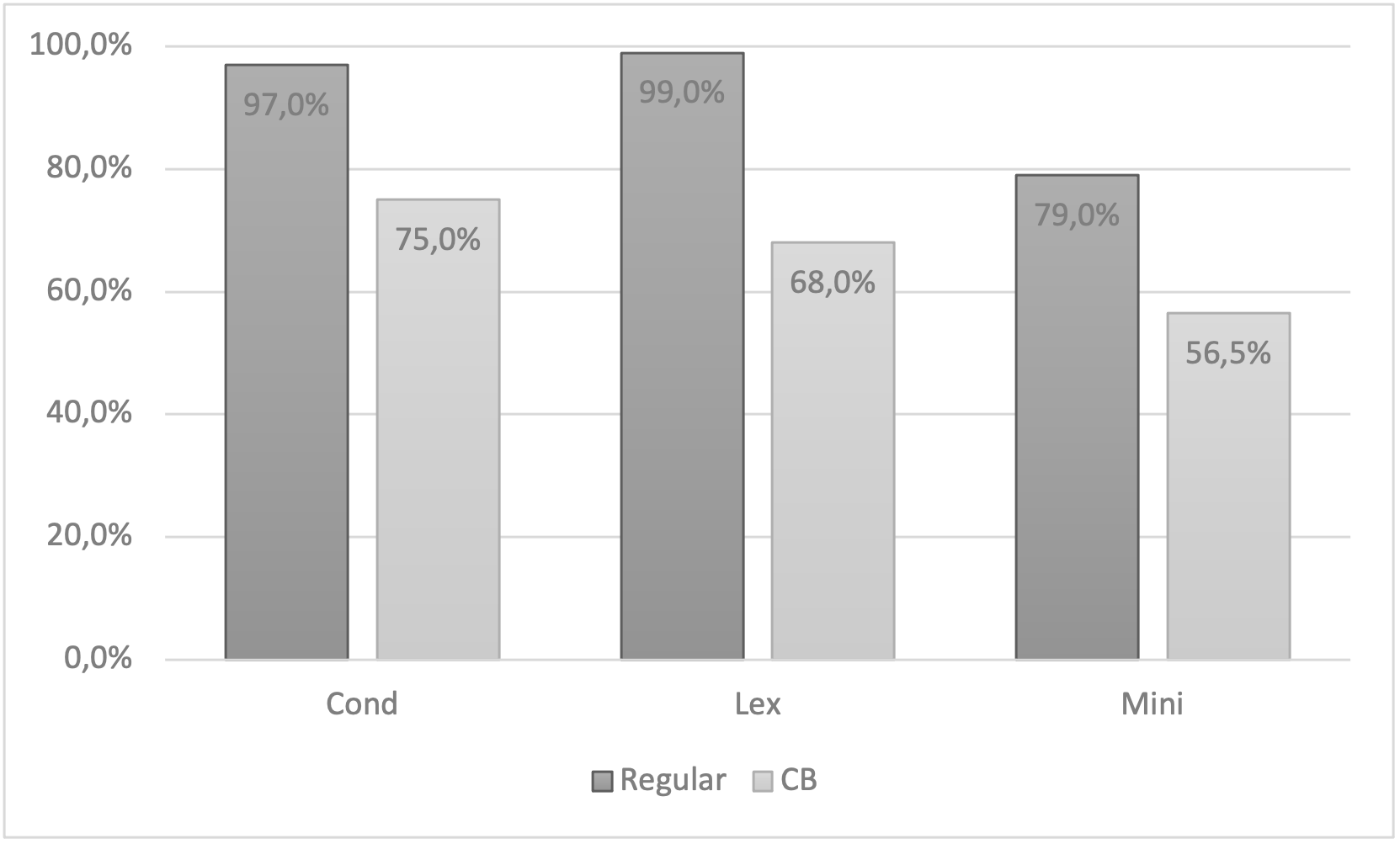}
    \caption{Confirmation-biased belief revision methods against unbiased belief revision methods}
    \label{fig:cb_against_unbiased}
\end{figure}

\subsection{Framing Bias}

Framing bias, also known as framing effect \cite{Kahneman:1979aa} refers to the fact that the way information is perceived (framed) by an agent can affect decision-making. We will introduce the framing function, $FR$ which, broadly speaking, gives a range of interpretation for an observation, i.e., the incoming information can be `re-framed' into another information, within the range allowed by $FR$.
\begin{definition}\label{framing-function}
    Given an epistemic space $\mathbb{S}=(S,\mathcal{O})$, the \emph{framing function} is  $FR:\mathcal{O}\rightarrow \mathcal{P}(S)$. 
\end{definition}
\noindent Note that the above definition is very general---we do not assume that the agent takes into account their observational apparatus, and so we allow for the observation to be interpreted as any proposition. While confirmation bias pertained to frequency of information in a stream, framing bias is related to its correctness and precision. We can pose a variety of constraints on framing, for instance we could require that the framed information is in some way related to the original information. In particular, in this paper we impose that, with the actual information $O$, the agent perceives $X$ such that $X\subseteq O$. In this case, i.e, $FR(O)\subseteq \mathcal{P}(O)$. This particular kind of framing can be seen as overconfidence bias, since given an observation with some uncertainty range, the learner sees it as one with a narrower range, i.e., one that is more certain.

As before, we will formally model the three belief revision methods, conditioning, lexicographic revision, and minimal revision under the conditions of the bias.  

\begin{definition}[Framing-bias methods]\label{fram-bias-methods}
    Let $\mathbb{B}=(S,\mathcal{O},\preceq)$ be a plausibility space, $\sigma\in \mathcal{O}^{*}$ a data sequence, $p\in \mathcal{O}$ an observable, $FR$ a framing function, and and $R_{1}$ is a one-step revision method. We define a framing-biased method in the following way:
\[ R_{FR}(\mathbb{B},\lambda)=\mathbb{B},\]
   \[ R_{FR}(\mathbb{B},\sigma\cdot p)=R_{1}(R_{FR}(\mathbb{B},\sigma),x), \text{ such that } x\in FR(p).
\]
We obtain the framing-biased conditioning, lexicographic and minimal revision $Cond_{FR}$, $Lex_{FR}$, $Mini_{FR}$ by substituting $R_{1}$ in the preceding definition by $Cond_{1}, Lex_{1}$ and $Mini_{1}$, respectively.
\end{definition}

\paragraph{Truth-tracking under framing bias} As before, we will now investigate how framing bias affects truth-tracking capabilities of belief revision methods.

\begin{definition}
Given a stream $\vec{O} =(O_0, O_1, \ldots)$ and a framing function $FR$, we define a framing of $\vec{O} $ as $FR(\vec{O} )=(P_0,P_1,\ldots)$, where for each $i\in\mathbb{N}$, $P_i\in FR(O_i)$. We will call $FR(\vec{O} )$ static iff for every $i,j\in\mathbb{N}$, with $i\neq j$, if $O_{i}=O_{j}$ then $P_{i}=P_{j}$, otherwise $FR(\vec{O} )$ is \emph{dynamic}.
\end{definition}
The first observation is that there are limit cases in which framing will not restrict the learning power of any of the revision methods, for instance when framing is a static identity function, or in more complicated, lucky cases when sound and complete streams are framed into (possibly different) sound and complete streams. In general however, framing will result in a certain kind of blindness, some worlds can get overlooked during the revision process. In particular, given an observable $O$ that is true at $s$, it might be the case that $O$ will get mapped to a set $P$, such that $s\notin P$, in other words, the agent will interpret a true observation as a proposition that is false in the actual world. This would be detrimental to any revision method. Hence, we get the following propositions.

\begin{proposition}
$Cond_{FR}$ and $Lex_{FR}$ are not universal. \end{proposition}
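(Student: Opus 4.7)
The plan is to construct a small identifiable epistemic space together with a framing function $FR$ satisfying the overconfidence constraint $FR(O)\subseteq\mathcal{P}(O)$, such that no prior plausibility allows $Cond_{FR}$ or $Lex_{FR}$ to converge to the actual world. Concretely, I would take $S=\{s_1,s_2\}$ and $\mathcal{O}=\{O_1,O_2\}$ with $O_1=S$ and $O_2=\{s_1\}$. First I would check that $\mathbb{S}=(S,\mathcal{O})$ is identifiable in the limit: using $Cond$ with the prior $s_2\prec s_1$, if the actual world is $s_1$ the stream eventually enumerates $O_2$ and this restricts the state space to $\{s_1\}$, while if the actual world is $s_2$ soundness forces the stream to enumerate only $O_1$, conditioning by which leaves the prior unchanged so $s_2$ stays minimal. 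Hence $\mathbb{S}$ is identifiable, and by Theorem \ref{th:br_uni} also identified by $Lex$.

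Next I would define $FR(O_1)=FR(O_2)=\{\{s_1\}\}$. This respects the overconfidence constraint since $\{s_1\}\subseteq O_1$ and $\{s_1\}\subseteq O_2$. The crucial feature is that the observation $O_1$, although true at $s_2$, is always re-framed as a proposition false at $s_2$, realising the ``blindness'' phenomenon described in the paragraph just before the proposition.

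The main argument would then fix an arbitrary prior $\preceq$ and a sound and complete stream $\vec{O}$ for $s_2$. Soundness forces every element of $\vec{O}$ to be $O_1$ (since $O_2$ is false at $s_2$), and framing turns every input into $\{s_1\}$. For $Cond_{FR}$, the first update intersects the state space with $\{s_1\}$, irrecoverably deleting $s_2$, so the conjecture can never equal $\{s_2\}$ again. For $Lex_{FR}$, a direct check of the lexicographic clause of Definition \ref{lex} shows that revising by $\{s_1\}$ produces $s_1\prec' s_2$ (the third clause applies to the pair $(s_1,s_2)$ but none of the three clauses applies to $(s_2,s_1)$); repeating the same revision preserves this strict inequality, so the minimum of the plausibility is always $\{s_1\}$. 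In either case the learner never conjectures $\{s_2\}$, so $\mathbb{S}$ is not identified.

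The step I expect to be the main obstacle is the $Lex_{FR}$ case: since lexicographic revision, unlike conditioning, does not eliminate worlds, one might worry that $s_2$ could recover as the minimum under a cleverly chosen prior. The argument above rules this out because the only framed input appearing in the stream is $\{s_1\}$, so once $s_1$ is pushed strictly below $s_2$ at the first step the ordering cannot be undone; the remaining verification is routine.
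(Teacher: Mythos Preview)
Your proof is correct and implements exactly the idea the paper sketches in the paragraph preceding the proposition: a true observation $O$ at the actual world gets re-framed to a proposition excluding that world, which is fatal for any revision method. The paper does not spell out a formal proof beyond that paragraph, so your concrete two-world construction with $FR(O_1)=FR(O_2)=\{\{s_1\}\}$ is a faithful and more explicit realisation of the same argument; your separate verification for $Lex_{FR}$ (checking that $s_1$ is pushed strictly below $s_2$ and stays there) is the right extra step, and your worry about $s_2$ recovering under some prior is indeed harmless since the only framed input ever seen is $\{s_1\}$.
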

\begin{proposition}
$Mini$ is strictly more powerful than $Mini_{FR}$.
\end{proposition}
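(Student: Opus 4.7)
My plan is to follow the two-step structure of the proof of Proposition \ref{prop:CB_truthtracking}. First, I will exhibit a concrete epistemic space identifiable by $Mini$ together with a framing function $FR$ that prevents $Mini_{FR}$ from identifying it under any plausibility order. Second, I will argue that for every choice of $FR$, every space identifiable by $Mini_{FR}$ is also identifiable by $Mini$.

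For the separating example, take $\mathbb{S}=(\{s_1,s_2\},\{O_1,O_2\})$ with $O_1=\{s_1,s_2\}$ and $O_2=\{s_2\}$. A direct case analysis shows that $Mini$ under the plausibility order $s_1\prec s_2$ identifies both worlds: the sound and complete stream for $s_1$ contains only $O_1$ and leaves the minimum at $s_1$, while the stream for $s_2$ eventually produces $O_2=\{s_2\}$, which promotes $s_2$ to the unique minimum and keeps it there. Define $FR(O_1)=O_1$ and $FR(O_2)=\emptyset$. Then $Mini_1$ applied to $\emptyset$ is a no-op, and the only informative framed proposition is $O_1$, which by itself cannot discriminate $s_1$ from $s_2$. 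A brief case split over the plausibility orders on $\{s_1,s_2\}$ shows that $Mini_{FR}$ with this $FR$ either identifies $s_1$ and misidentifies $s_2$ (under $s_1\prec s_2$), identifies $s_2$ and misidentifies $s_1$ (under $s_2\prec s_1$), or identifies neither (under $s_1\equiv s_2$). Hence $\mathbb{S}$ is not $Mini_{FR}$-identifiable.

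For the second direction, the intended argument mirrors the $Cond \supseteq Cond_{CB}$ part of Proposition \ref{prop:CB_truthtracking}. Fix $\preceq$ and $FR$ for which $Mini_{FR}$ identifies $\mathbb{S}$. Then for each $s\in S$ and each sound and complete stream $\vec{O}$ for $s$, $Mini$ started from $\preceq$ converges to $\{s\}$ when fed the framed stream $FR(\vec{O})$. Since by the paper's convention $FR(O)\subseteq O$, the framed stream carries only a subset of the discriminating information available in the original stream, so one expects $Mini$ to reach the same conclusion when fed the original, wider observations, perhaps starting from a suitably adjusted plausibility order.

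The main obstacle is making this second step rigorous. Unlike conditioning, $Mini$ is not monotone in its candidate worlds: a $Mini_1$-update on a narrower proposition $FR(O)$ generally promotes different worlds than a $Mini_1$-update on the wider $O$, so the $Mini$-trace on $\vec{O}$ does not step-by-step simulate the $Mini$-trace on $FR(\vec{O})$. The cleanest route I foresee is by contraposition: assuming no plausibility order lets $Mini$ identify $\mathbb{S}$, build, from any candidate $\preceq$ and any framing $FR$, a state $s$ and a sound and complete stream for $s$ on which the induced framed stream forces $Mini_{FR}$ to fail as well, using that $FR$ is world-independent and therefore cannot distinguish two states whose observation sets coincide up to framing.
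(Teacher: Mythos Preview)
The paper offers no formal proof of this proposition; the only justification is the informal sentence immediately preceding it (``given an observable $O$ that is true at $s$, it might be the case that $O$ will get mapped to a set $P$, such that $s\notin P$ \ldots\ This would be detrimental to any revision method''). So there is little to compare your proposal against beyond that remark. Your separating example is a correct and much more explicit instantiation of the paper's one-line idea: by framing $O_2$ to the empty proposition you make every framed input a no-op for $Mini_1$, and the case split over the three preorders on $\{s_1,s_2\}$ is complete. (Notationally, since $FR(O)$ is a set of admissible framings, you want $FR(O_1)=\{O_1\}$ and $FR(O_2)=\{\emptyset\}$ rather than $FR(O_1)=O_1$ and $FR(O_2)=\emptyset$, but the intent is clear.)

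Where your proposal has a genuine gap is the second direction, and you are right to flag it as the obstacle---but your stated heuristic is pointing the wrong way. Under the paper's constraint $FR(O)\subseteq\mathcal{P}(O)$, each framed proposition $x\in FR(O)$ is a \emph{subset} of $O$, hence logically \emph{stronger}: when $x$ is still true at the actual world it discriminates \emph{more}, not less, than $O$. So ``the framed stream carries only a subset of the discriminating information'' is backwards, and the expectation that $Mini$ on the wider original observations should do at least as well as on the narrower framed ones does not follow. This is exactly why the $Cond$-style argument from Proposition~\ref{prop:CB_truthtracking} does not transfer: there the biased method simply \emph{delays} updates, whereas here the biased method may perform \emph{sharper} updates. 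Your contrapositive sketch at the end does not close the gap either, since ``$FR$ cannot distinguish two states whose observation sets coincide up to framing'' is a constraint on $Mini_{FR}$, not a lever that forces $Mini_{FR}$ to fail wherever $Mini$ does. The paper itself does not supply an argument for this inclusion, so your proof is at least as complete as the paper's---but you should not present the second paragraph as if the monotonicity intuition were sound.
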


The dynamic framing allows for \emph{fair framing} of streams, where the agents observes input `erroneously' for finitely many steps, after which it is presented a full sound and complete stream. This is a notion analogous to that of \emph{fair streams} in \cite{Baltag:2019ab}, and the following is a direct consequence of the result therein of $Lex$ being universal on fair streams.

\begin{proposition}
$Lex_{FR}$ is universal on fairly framed streams. 
\end{proposition}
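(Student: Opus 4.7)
The plan is to reduce universality of $Lex_{FR}$ on fairly framed streams to the universality of $Lex$ on fair streams established in \cite{Baltag:2019ab}. First I would make the link between framing a stream and running $Lex_{FR}$ precise: by the definition of $R_{FR}$, applying $Lex_{FR}$ to $\mathbb{B}$ and data sequence $\vec{O}[n]$ produces exactly the same plausibility space as applying the ordinary $Lex$ to $\mathbb{B}$ and the framed sequence $FR(\vec{O})[n]$, where each observation $O_i$ has been replaced by the chosen $P_i\in FR(O_i)$. Hence $L^\preceq_{Lex_{FR}}(\mathbb{S},\vec{O}[n])=L^\preceq_{Lex}(\mathbb{S},FR(\vec{O})[n])$ and truth-tracking of $Lex_{FR}$ on $\vec{O}$ is equivalent to truth-tracking of $Lex$ on $FR(\vec{O})$.

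Next I would unpack fair framing. A fairly framed stream is one for which there is some $N$ such that for all $i\geq N$ the framing behaves faithfully, i.e., $P_i=O_i$, so the tail $FR(\vec{O})[N..]$ coincides with the sound and complete tail $\vec{O}[N..]$ for the actual world $s$. Consequently the whole stream $FR(\vec{O})$ is itself a fair stream in the sense of \cite{Baltag:2019ab}: it differs from a sound and complete stream for $s$ only on a finite initial segment. Take any $\preceq$ under which $Lex$ identifies $\mathbb{S}$ on genuinely fair streams (such a $\preceq$ exists by the universality of $Lex$ on fair streams cited in the paragraph preceding the statement), and use the same $\preceq$ for $L^\preceq_{Lex_{FR}}$.

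From here the conclusion is immediate: by the cited universality, $Lex$ eventually stabilizes on $\{s\}$ when fed the fair stream $FR(\vec{O})$, and by the equivalence from the first step this is exactly the behaviour of $Lex_{FR}$ on $\vec{O}$. Since this works for any epistemic space $\mathbb{S}$ that is identifiable in the limit and any actual world $s\in S$, we conclude that $Lex_{FR}$ is universal on fairly framed streams.

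The main conceptual obstacle, and really the only thing to argue carefully, is verifying that the finite prefix of possibly incorrect frames cannot permanently corrupt $Lex_{FR}$. This is essentially the defining virtue of lexicographic revision: each update strictly promotes all worlds consistent with the new datum above all worlds inconsistent with it, so any finitely many earlier misleading revisions are overwritten once the true sound and complete tail begins to arrive. This is exactly the property underlying the fair-stream universality result of \cite{Baltag:2019ab}, which is why citing it closes the argument without redoing the construction.
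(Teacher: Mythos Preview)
Your proposal is correct and follows exactly the approach the paper takes: the paper states the proposition as ``a direct consequence of the result therein of $Lex$ being universal on fair streams'' from \cite{Baltag:2019ab}, and you have simply spelled out that reduction explicitly (identifying $Lex_{FR}$ on $\vec{O}$ with $Lex$ on $FR(\vec{O})$, and noting that fair framing makes $FR(\vec{O})$ a fair stream). One small wording caution: your unpacking of fair framing as ``$P_i=O_i$ for $i\geq N$'' together with the claim that the tail $\vec{O}[N..]$ is itself sound and complete is slightly stronger than what is needed and not automatic (a tail of a sound and complete stream need not be complete); the paper's informal definition already stipulates that the post-$N$ portion the agent sees is a full sound and complete stream, which is all your argument requires.
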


\paragraph{Simulation results} As before, we ran a comparative simulation study of confirmation-biased revision and the regular unbiased revision. As before we generate a sound and complete stream, which then gets transformed into its framed version, by applying the framing function to each observation independently. By the restrictions we impose, the framing function outputs always a random subset of the original proposition, which can be the empty set.  Figure \ref{fig:fb_against_unbiased} shows the respective frequencies of truth-tracking success.

\begin{figure}[H]
    \centering
    \includegraphics[scale=0.8]{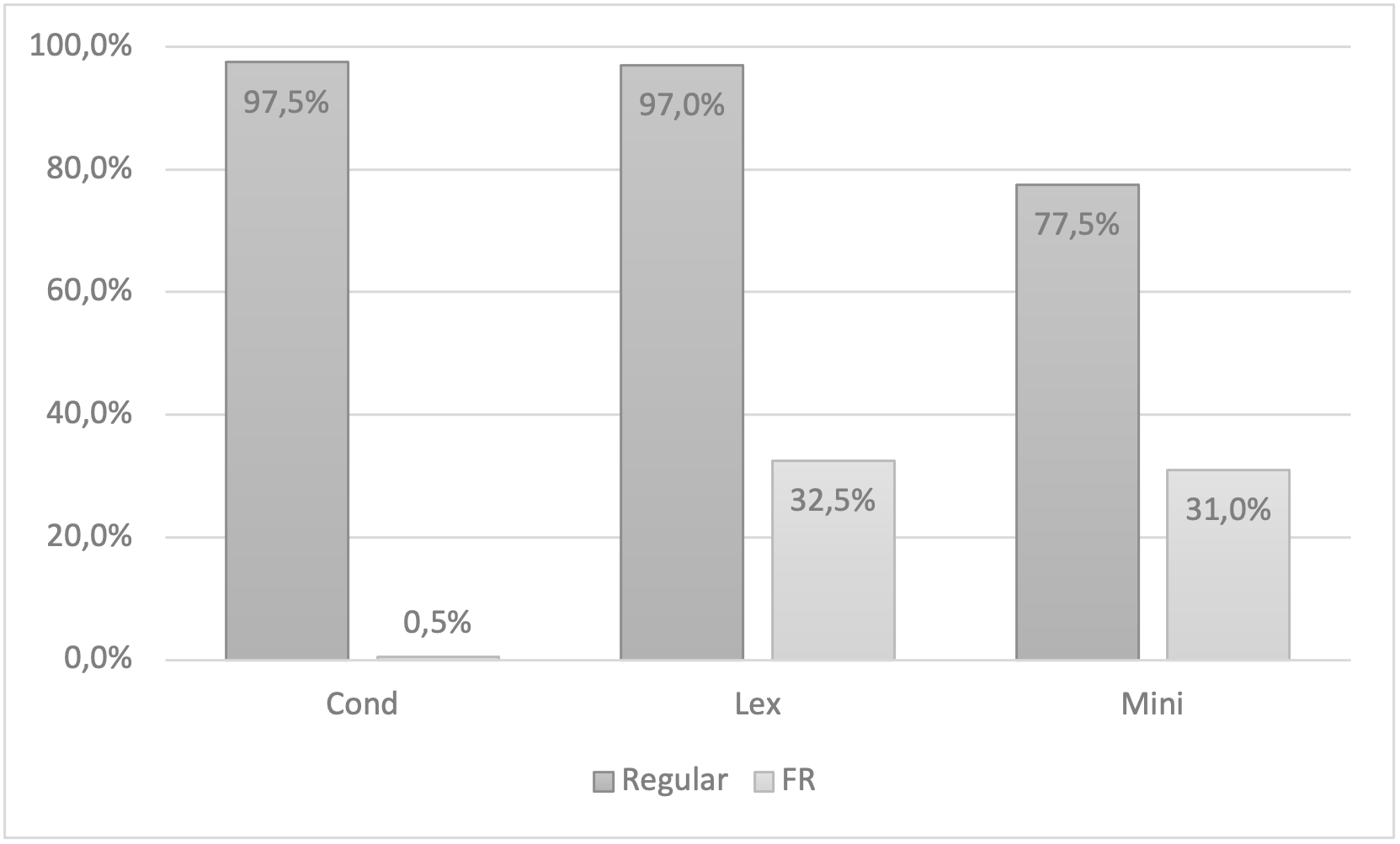}
    \caption{Framing-biased belief revision methods against unbiased belief revision methods}
    \label{fig:fb_against_unbiased}
\end{figure}

\subsection{Anchoring Bias}
Anchoring bias plays a role in decision-making influenced by the most recently received information, and it is strongly connected to lack of resources. We make everyday decisions under time pressure. These decisions are, often unconsciously, influenced by the piece of information received last before the decision point \cite{Rezaei:2021aa}. Moreover, anchoring bias in real-life scenarios can introduce a level of randomness in decision making. Consider, as an example, a student who takes part in an exam involving a multiple choice test. Due to lack of time they have to answer a question without being able to analyse it properly. While going through possible answers, the student might pick one that reminds them of something they have seen recently in their notes. 
As in the previous cases, we will provide a general definition of anchoring-biased methods. The mechanism will consists of two components, one is that the revision mechanism will always perform a minimal change, the other one is that in the case the revision step results in multiple minimal possible words, one of them will be chosen at random and made most plausible overall. In order to phrase this formally, we need several new notions. Given a set $S$, a preorder ${\preceq}\subseteq (S\times S)$, and $x\in S$, we define ${\preceq} \uparrow x:= ({\preceq} \cap (S\setminus\{x\}\times S\setminus \{x\})) \cup\{(x,s)\mid s\in S\setminus \{x\}\}$. Intuitively, this operation takes an order and outputs a new updated version of it, with $x$ upgraded to be the most plausible world. Now we will define new versions of one-step revision methods, which include in their first part the unbiased one-step revision methods and in their second part the upgrade operator.
Let $\mathbb{B}=(S,\mathcal{O},\preceq)$, $p\in \mathcal{O}$ and $Lex_1(\mathbb{B}, p)=(S,\mathcal{O},\preceq')$, we define $$Lex^+_1(\mathbb{B},p)=\begin{cases} (S,\mathcal{O},\preceq') & \text{if } |min_{\preceq'}S|=1; \\
(S,\mathcal{O},\preceq'\uparrow x), \text{ with } x\in min_{\preceq'} S & \text{otherwise}.\\
\end{cases}$$
\noindent The upgraded minimal revision, $Mini^+_1$, is defined analogously. It remains to discuss what happens when conditioning results in several minimal worlds. We propose the following interpretation.
Let $\mathbb{B}=(S,\mathcal{O},\preceq)$, $p\in \mathcal{O}$ and $Cond_1(\mathbb{B}, p)=(S',\mathcal{O}',\preceq')$, we define $$Cond^+_1(\mathbb{B},p)=\begin{cases} (S',\mathcal{O}',\preceq') & \text{if } |min_{\preceq'}S'|=1; \\
(\{x\},\mathcal{O'},\emptyset), \text{ with } x\in min_{\preceq'} S' & \text{otherwise}.\\
\end{cases}$$
\noindent $Cond^+_1$ is a very `impatient' method, as long as a singular minimal world is available, it just follows the usual drill, but if at any stage several worlds are most plausible, it picks one of them and throws away the rest of the space. This is very radical, but this way we avoid upgrading the order, which would go against the spirit of conditioning. 

\begin{definition}[Anchoring-biased methods]\label{anc-biased-methods}
    Let $\mathbb{B}=(S,\mathcal{O},\preceq)$, $\sigma\in O^{*}$ a data sequence, $p\in O$ an observable. We define the anchoring-biased methods $R_{AB}$ as:
\[R_{AB}(\mathbb{B},\lambda)=\mathbb{B},\]
\[R_{AB}(\mathbb{B},\sigma\cdot p)=R^+_{1}(R_{AB}(\mathbb{B},\sigma), min_{\preceq_{AB}}(S_{AB}\cap p)),\]
where $R_{AB}(\mathbb{B},\sigma)=(S_{AB},\mathcal{O}_{AB}, \preceq_{AB})$.
We obtain the anchoring-biased conditioning, lexicographic and minimal revision $Cond_{AB}$, $Lex_{AB}$, $Mini_{AB}$ by substituting $R^+_{1}$ above by $Cond^+_{1}, Lex^+_{1}$ and $Mini^+_{1}$, respectively.
\end{definition}

Unbiased minimal belief revision is in itself, interestingly, a form of anchoring bias. An agent using minimal belief revision actually uses the most plausible worlds where the incoming information is true to update its belief accordingly. When it comes to lexicographic revision, the definition is slightly different, but the behavior of anchoring-biased lexicographic belief revision is the same as that of unbiased minimal revision. By imposing the extra upgrade condition we  make anchoring-biased methods more `actionable', reflecting the fact that anchoring bias often plays a role in quick decision-making. After each revision step anchoring ensures that there is a candidate for the best possible world, which is randomly selected among the minimal worlds at that stage. This is especially important if resources for performing revision are limited (in the simulation these cases will be labeled `-res'). We will see that this augmentation positively affects the biased methods, even though in general the anchoring biased belief revision methods are not universal.

\paragraph{Truth-tracking under anchoring bias}\label{truth-tracking-anchoring}

Anchoring bias is most prominently connected to lack of resources. For example, when someone needs to make a decision under the pressure of time, anchoring bias can be used as heuristic. In this section we will show that, even though anchoring bias breaks universality, it can facilitate faster identification of the actual world. 

\begin{example}
Consider the plausibility space $\mathbb{B}=(\mathbb{S},\preceq)$, where $\mathbb{S}=(S,\mathcal{O})$, $S=\{w,r,s,t\}$ and $s$ the actual world. The initial plausibility order is $w\preceq t\simeq s \preceq r$, so the agent is indifferent between the worlds $t$ and $s$, and the observable propositions are $p=\{w\}, q=\{r,t,w\},\bar{p}=\{r,s,t\}$ and $\bar{q}=\{s\}$. Consider also a sound a complete data stream with respect to the actual world, $\vec{O}=(\bar{p},\ldots,\bar{q},\ldots)$. An agent using anchoring-biased conditioning identifies the actual world in the first piece of information with probability $.5$. Of course, with probability $.5$ the actual world is excluded and so the agent will not identify it. Assuming that the biased agent identifies the actual world, anchoring-biased conditioning is faster than conditioning by $k-1$ steps, where $\vec{O}_{\kappa}$ is the first occurrence of $\bar{q}$ in the data stream $\vec{O}$. Note that unbiased minimal revision will identify the world $s$ only after receiving $\bar{q}$. 

\end{example}

\noindent The above example points at the following proposition.
\begin{proposition}
$Cond_{AB}$ is not universal.
\end{proposition}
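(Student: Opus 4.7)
My plan is to exhibit a very small epistemic space that is trivially identifiable by $Cond$ but on which no initialization of $Cond_{AB}$ can succeed. Let $\mathbb{S}=(S,\mathcal{O})$ with $S=\{s_1,s_2\}$ and $\mathcal{O}=\{\{s_1\},\{s_2\},\{s_1,s_2\}\}$. Clearly $\mathbb{S}$ is identifiable in the limit: under any initial preorder, every sound and complete stream for $s_i$ must eventually list the singleton $\{s_i\}$, upon which conditioning restricts the space to $\{s_i\}$. By Theorem~\ref{th:br_uni} this is enough to witness identifiability.

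Next I would show that, for every total preorder $\preceq$ on $S$, there is a world $s$ and a sound complete stream for $s$ on which $Cond_{AB}$ fails. Suppose first that $\preceq$ is antisymmetric, say $s_1 \prec s_2$. Take $s_2$ as the actual world and any sound and complete stream for $s_2$ whose first entry is the ``coarse'' observation $\{s_1,s_2\}$. Since $\min_{\preceq}(S\cap\{s_1,s_2\})=\{s_1\}$, the first revision step feeds the singleton $\{s_1\}$ to $Cond^+_1$, whose inner conditioning deletes $s_2$ and returns the singleton space $(\{s_1\},\mathcal{O},\emptyset)$. From that point on the state space is frozen, so no later observation can make the conjecture equal $\{s_2\}$. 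The symmetric argument covers $s_2 \prec s_1$. If instead $\preceq$ ties the two worlds, $s_1 \simeq s_2$, then again starting the stream with $\{s_1,s_2\}$ yields $\min_{\preceq}(S\cap\{s_1,s_2\})=\{s_1,s_2\}$, so $Cond_1$ leaves the space unchanged but with two tied minima, and by definition $Cond^+_1$ collapses it to $(\{x\},\mathcal{O},\emptyset)$ for some $x\in\{s_1,s_2\}$. Whichever $x$ is chosen, the other world is eliminated, so for one of the two worlds the learner cannot converge to its singleton on the corresponding stream.

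The one delicate point is the interpretation of the non-deterministic choice built into $Cond^+_1$: the definition only says ``with $x \in \min_{\preceq'} S'$''. I would follow the convention suggested by the preceding example, where the authors attribute a probability of $1/2$ to exactly this random resolution; under that reading no realization of $Cond_{AB}$ avoids a positive chance of dropping the actual world in the tied case, so identification with probability one is impossible. This is really the main obstacle; once the semantics of the random choice is fixed, combining the three subcases above shows that every initial plausibility $\preceq$ admits a failing world and stream, so $Cond_{AB}$ is not universal.
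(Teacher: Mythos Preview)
Your argument is correct and in fact more complete than what the paper offers. The paper does not give a separate proof of this proposition; it simply writes ``the above example points at the following proposition,'' referring to the four-world space $\{w,r,s,t\}$ with a single fixed preorder, where with probability $0.5$ the anchoring collapse after the first observation discards the actual world $s$. That example illustrates the failure mechanism but does not explicitly quantify over all priors $\preceq$, which is what non-universality formally requires.

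You close that gap by working in the minimal two-world space and exhaustively handling the three possible preorders (two strict, one tied). The underlying idea is identical to the paper's---the anchoring step restricts to $\min_{\preceq}(S\cap p)$ and $Cond^+_1$ then collapses irreversibly, so the actual world can be thrown away on the very first observation---but your case analysis makes the universal quantifier over $\preceq$ explicit, and the two-element space keeps the verification trivial. Your discussion of the non-deterministic choice in $Cond^+_1$ is also appropriate: the paper itself treats it probabilistically in the example, and under either the demonic or the probabilistic reading your tied case goes through. One cosmetic point: after the collapse to $\{s_1\}$ the space is not literally ``frozen,'' since a later observation $\{s_2\}$ intersects it emptily; but this only makes things worse for the learner, as the conjecture can then never equal $\{s_2\}$, so your conclusion is unaffected.
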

\noindent Moreover, since $Lex_{AB}$ is a version of $Mini$, based on Theorem \ref{th:br_uni}, we can state the following. 

\begin{proposition}
$Lex_{AB}$ is not universal.
\end{proposition}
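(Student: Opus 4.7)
The plan is to reduce the non-universality of $Lex_{AB}$ to that of $Mini$ given by Theorem \ref{th:br_uni}, exploiting the structural parallel between the two methods already noted in the paper's discussion just above the proposition. Unfolding Definition \ref{anc-biased-methods}, $Lex_{AB}$ feeds $Lex^+_1$ the set $\min_{\preceq_{AB}}(S_{AB}\cap p)$---exactly the set on which $Mini_1$ acts when revising with $p$. Both operations thus promote the current minimum of $p$-worlds to the top of the new plausibility order, and the only essential new feature of $Lex_{AB}$ is the $+$ step which, when that minimum is not a singleton, non-deterministically selects an element $x$ and upgrades it uniquely to the top.

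Next I would invoke Theorem \ref{th:br_uni} to fix an epistemic space $\mathbb{S}^\ast$ that is identifiable in the limit but not identifiable by $Mini$ under any prior plausibility preorder. For any prior $\preceq$ on $\mathbb{S}^\ast$, the witnessing failure of $Mini$ supplies a sound and complete stream for some actual world $s^\ast$ on which $Mini$'s minimum never stabilises at $\{s^\ast\}$. I would then argue $Lex_{AB}$ inherits this failure along the same stream: if the $+$ step is never triggered, the $Lex_{AB}$-dynamics mirror those of $Mini$ and cannot converge to $\{s^\ast\}$; if the $+$ step is triggered, the non-deterministic choice can pick some $x\neq s^\ast$ from a tied minimum, producing a stable wrong conjecture $\{x\}\neq\{s^\ast\}$. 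Either way no prior makes $Lex_{AB}$ identify $\mathbb{S}^\ast$, so $Lex_{AB}$ is not universal.

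The main obstacle will be the first step: $Mini_1(\mathbb{B},p)$ and $Lex_1(\mathbb{B},\min_{\preceq}(S\cap p))$ are not literally the same plausibility space---for instance $Mini_1$ can leave ties between old minima in $\bar p$ and the new $\min_p$-worlds while $Lex_1$ places the latter strictly above---so the abstract reduction needs care. The cleanest workaround is to exhibit a concrete witness $\mathbb{S}^\ast$ in the style of the small four-world examples used elsewhere in the paper, where $Mini$'s failure is caused by persistent ties among candidate minimal worlds that no single observation in $\mathcal{O}$ can resolve; on such a space the $+$ step always has a wrong alternative available, and failure of $Lex_{AB}$ can be verified directly without relying on an exact one-step equivalence with $Mini$.
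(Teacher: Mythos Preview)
Your approach is essentially the same as the paper's: reduce the non-universality of $Lex_{AB}$ to that of $Mini$ via Theorem~\ref{th:br_uni}, using the observation that $Lex_{AB}$ promotes $\min_{\preceq}(S\cap p)$ just as $Mini$ does. In fact the paper's own justification is a single sentence---``since $Lex_{AB}$ is a version of $Mini$, based on Theorem~\ref{th:br_uni}''---so your plan is already more detailed than what the paper provides, and your explicit acknowledgment that $Mini_1(\mathbb{B},p)$ and $Lex_1(\mathbb{B},\min_\preceq(S\cap p))$ are not literally identical (and that a concrete witness may be the cleanest route) is a level of care the paper does not supply.

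One caution on your intermediate case analysis: the claim that a triggered $+$ step ``produces a stable wrong conjecture $\{x\}$'' is not automatic, since subsequent observations true in $s^\ast$ but false in $x$ can demote $x$ again under $Lex^+_1$. You already flag this kind of issue and propose the concrete-witness workaround, which is the right fix; just be aware that the abstract two-case argument as stated does not close by itself.
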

\noindent Even though anchoring-biased lexicographic belief revision is not universal, it can facilitate faster truth tracking. The argument includes cases wherein the agent is indifferent between more than one most plausible worlds. Recall that an agent which uses anchoring-biased lexicographic revision revises similarly to one that uses unbiased minimal revision, but if the set of the worlds which considers most plausible is not a singleton, it selects one of the most plausible worlds with equal probability.

Unbiased minimal revision can be seen as a form of anchoring bias, as an agent that uses minimal belief revision, minimally updates its belief to be compatible with $min_{\preceq}(p)$. The difference is in the way they select the most plausible worlds after each update. Anchoring bias minimal revision and unbiased minimal revision will be compared in simulations below, where we investigate if the randomness included in anchoring-biased minimal revision improves the performance with respect to unbiased minimal revision.

\paragraph{Simulation results} We again ran a comparative simulation study of confirmation-biased revision and the regular unbiased revision, following the method described in Section \ref{sec:simulating_br}. In the case there was more than one minimal state at a certain stage of the belief revision process, the anchoring method selected one of the minimal states at random to be the conjecture of the learning method. Figure \ref{fig:cb_against_unbiased} shows the respective frequencies of truth-tracking success.
\begin{figure}[H]
    \centering
    \includegraphics[scale=0.8]{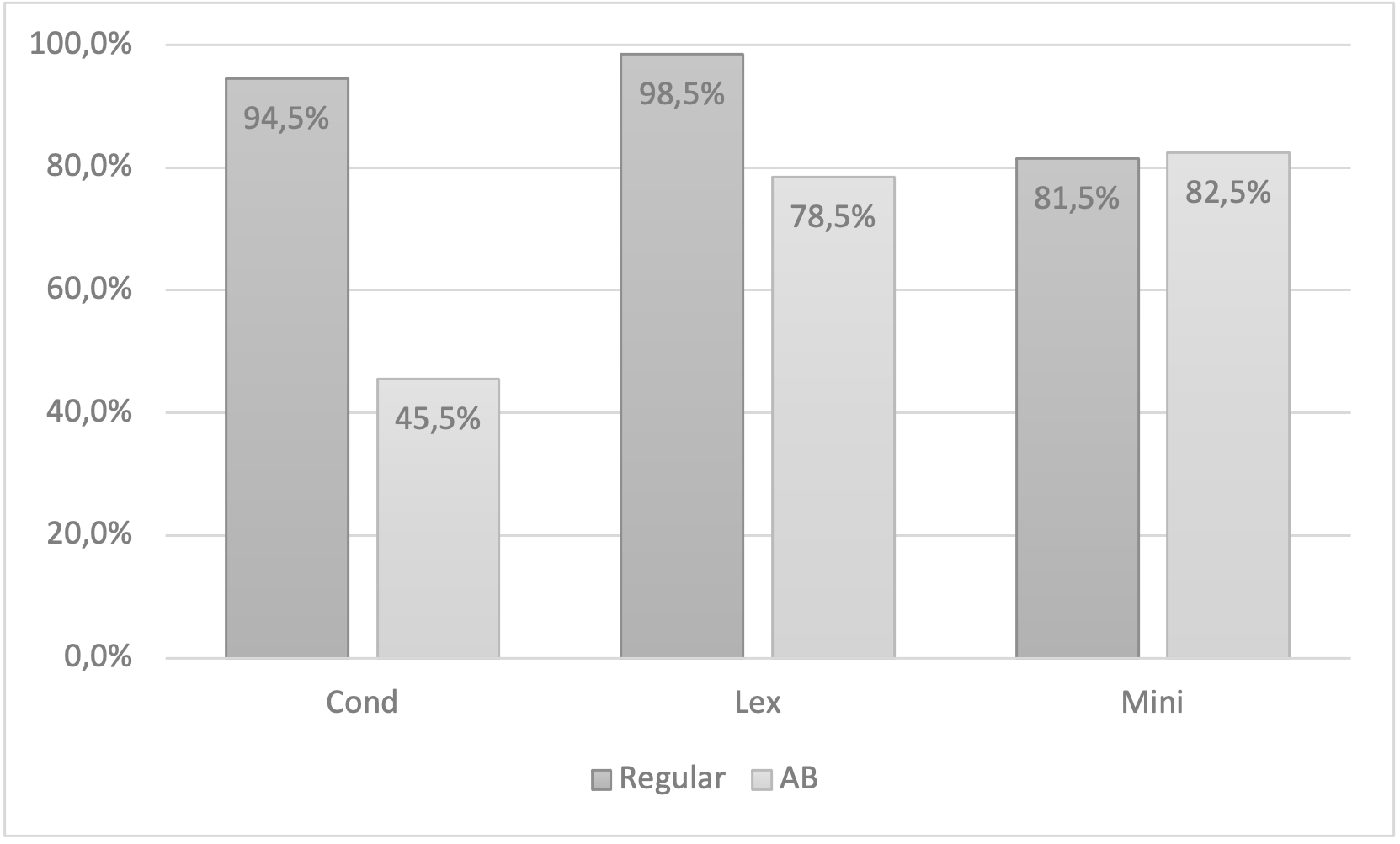}
    \caption{Anchoring-biased belief revision against unbiased belief revision}
    \label{fig:ab_against_unbiased}
\end{figure}
As anchoring bias often shows up in the context of limited resources, we run another experiment, wherein we included a parameter (a real number between $0$ and $100$) which decreases each time a revision takes place, and the process terminates when the resource is depleted. In this particular implementation, each time a revision is executed the available resource is halved and the agent stops revising when its resources fall below $1$. As we can see in Figure \ref{fig:ab_against_unbiased_resources} the anchoring ability to select a random world to be the candidate for the actual world improves the truth-tracking ability, especially in the case of minimal revision.

\begin{figure}[H]
    \centering
    \includegraphics[scale=0.8]{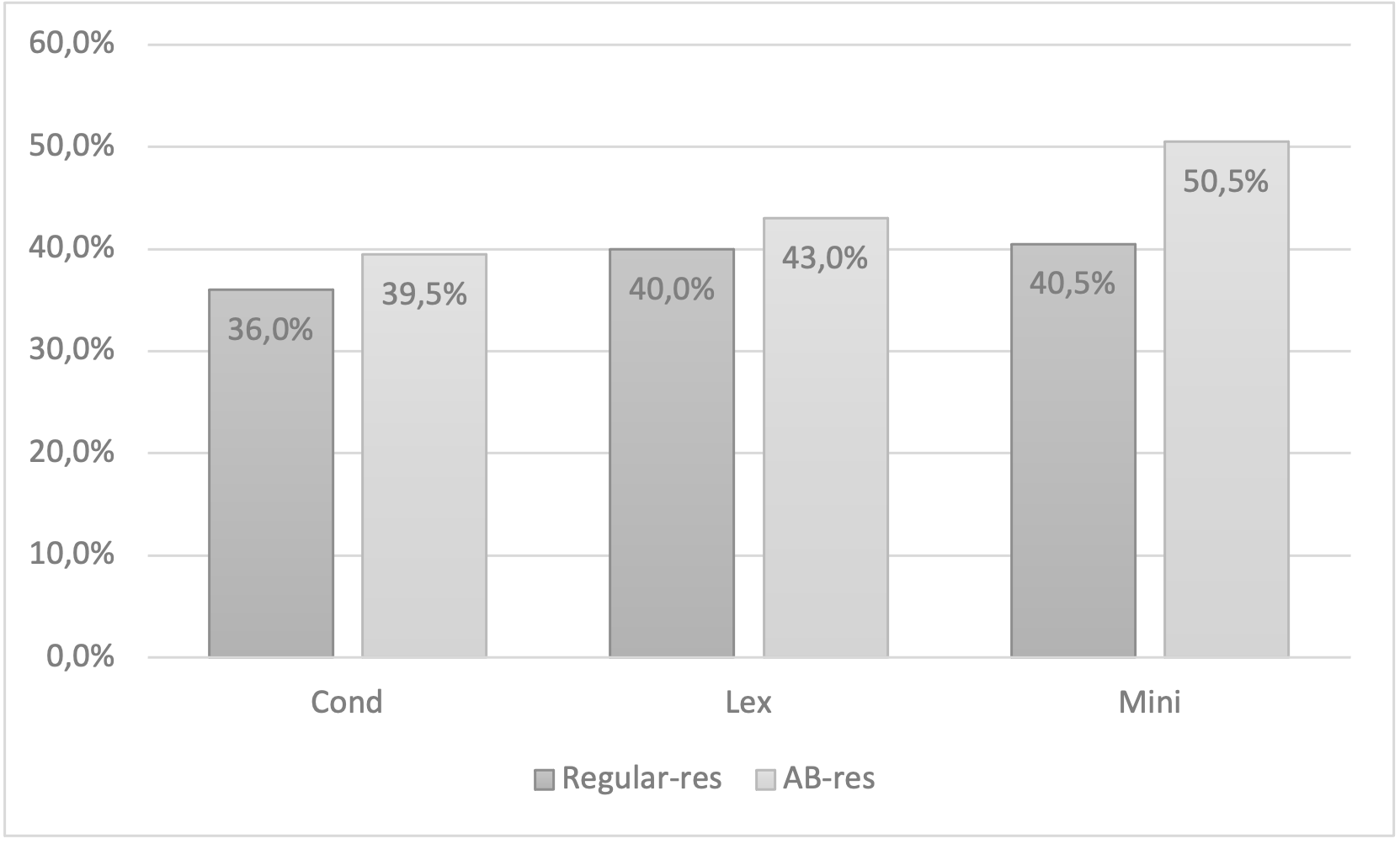}
    \caption{Anchoring-biased belief revision against unbiased belief revision - limited resources}
    \label{fig:ab_against_unbiased_resources}
\end{figure}

Finally, let us summarize some general observations about the simulation. Various components of a plausibility space affect the performance of the methods, both biased and unbiased ones. Specifically, an increase in the number of states negatively affects the performance of the belief revision methods (see Figure \ref{fig:obs_perf_anc_lex}), while an increase in the number of observables decreases the number of non-identifiable worlds, which in effect can make unbiased methods fail. More plots with the results can be found on the project repository \cite{repo}.

We also saw that, as expected, cognitive-biased belief revision methods perform worse than the unbiased ones. An exception is the anchoring-biased minimal belief revision method. Additionally, when limited resources are implemented, anchoring-biased belief revision methods perform better than the unbiased ones. This is a significant result, as it provides a potential alternative tool for truth-tracking when the resources are limited, which is usually the case in real life scenarios.

\section{Conclusions}

Cognitive bias in artificial intelligence is an interesting topic with a bright future, and as such deserves to be investigated in the context of belief revision and knowledge representation. In this paper we provided ways to formalize bias in belief-revision and learning. The three kinds of bias we discussed had completely different character, and employed different components of our belief revision based learners. We have also shown that bias can be detrimental to learning understood as truth-tracking.  

In general, biased methods are by far less reliable than the unbiased ones. 
While cognitive bias is generally problematic for truth-tracking, when resources are scarce it can be considered a tool or a heuristic. Anchoring-biased methods are a good example here, as the tests we conducted showed. This point can also serve as a rehabilitation of minimal revision, which in general is not a universal learning method.  
 
 When it comes to the simulation, we have found, in line with our expectations, that $Cond$ and $Lex$ identify the actual world in almost every test. Moreover, in general, the larger the number of observables, the higher the chances for the agent to identify the actual world. The same holds for the length of the data sequence, see Figure \ref{fig:obs_perf_anc_lex}.
 \begin{figure}[ht]
    \centering
         \includegraphics[width=.45\textwidth]{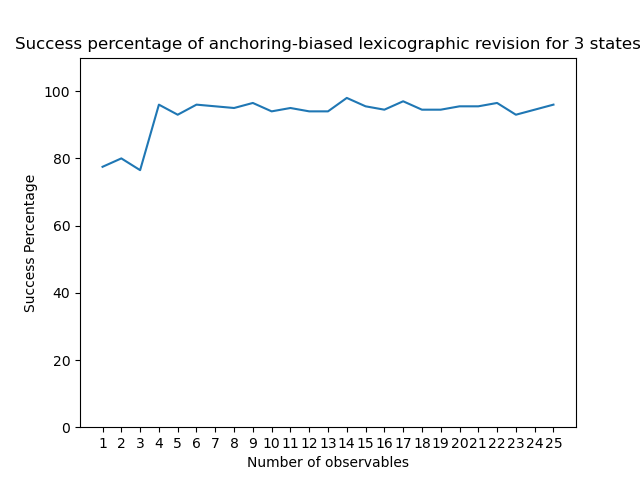}
      \includegraphics[width=.45\textwidth]{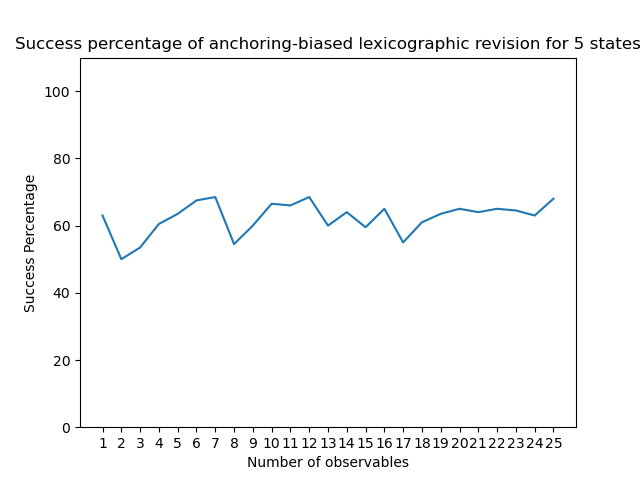}
          \includegraphics[width=.45\textwidth]{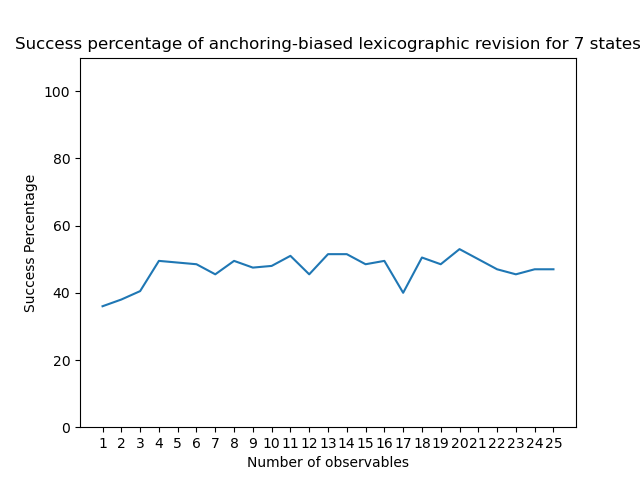}
          \includegraphics[width=.45\textwidth]{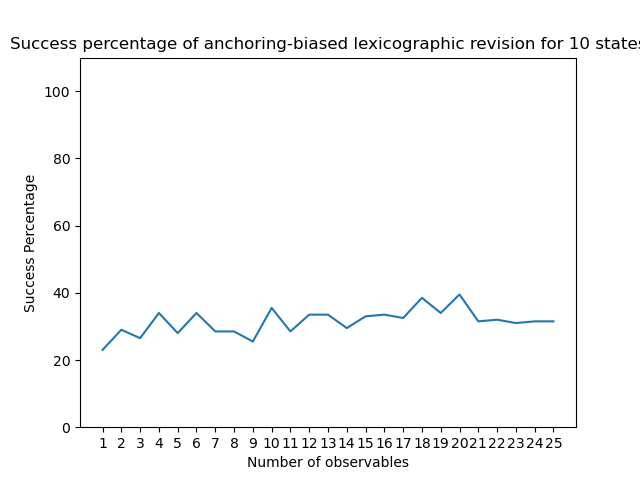}
\caption{Number of observables against anchoring bias lexicographic revision's performance for different number of possible states. The remaining graphs can be accessed in our GitHub repository \cite{repo}.}
\label{fig:obs_perf_anc_lex}
\end{figure}
\noindent Biased belief revision methods, are in general less successful than the unbiased ones---in particular, the information loss in framing can be fatal for truth-tracking by conditioning. On the other hand, anchoring bias can be used as a heuristic for faster identification.

In our work we model only some types of cognitive bias, the ones more applicable in artificial intelligence. Types mostly related to human emotional decision-making were intentionally excluded, but they would be a very interesting topic of future work. 
Moreover, although we investigated how randomness on the states, observables, and data streams affects truth-tracking, randomness of the environment itself is not a factor in this model. Assigning some bias to the elements of the tests could potentially give better insights into truth-tracking. Finally, it would be very interesting to relate our results to the existing work on resource bounded belief revision in the AGM paradigm, in particular to \cite{Wassermann:1999aa}, to look for expressibility results in the context of dynamic logic of learning theory (DLLT, \cite{dynamic-logic-for-lt}), and, last but not least, make steps towards empirical predictions for cognitive science of bias. 

\bibliographystyle{eptcsini}
\bibliography{bibliography}

\end{document}